\title{\bf  Ultrahigh dimensional variable selection: beyond the linear model}
\author{Jianqing Fan\thanks{Department of Operations Research
and Financial Engineering, Princeton University, Princeton, NJ
08540. E-mail: jqfan@princeton.edu.  Financial support from the NSF
grants DMS-0704337, DMS-0714554 and NIH grant
R01-GM072611 is gratefully acknowledged.}
\and Richard Samworth\thanks{Statistical Laboratory, University of Cambridge, Cambridge, CB3 0WB, United Kingdom.}
\and Yichao Wu\thanks{Department of Statistics, North Carolina State Univesity, Raleigh, NC 27695. E-mail: wu@stat.ncsu.edu.}}
\date{}
\newtheorem{theorem}{Theorem}
\newcommand{\bff}{\mbox{\bf f}}
\newcommand{\bx}{\mbox{\bf x}}
\newcommand{\by}{\mbox{\bf y}}
\newcommand{\bL}{\mbox{\bf L}}
\newcommand{\bX}{\mbox{\bf X}}
\newcommand{\bbeta}{\mbox{\boldmath $\beta$}}
\newcommand{\sbbeta}{\mbox{\scriptsize \boldmath $\beta$}}
\newcommand{\sM}{{\scriptsize \cal M}}
\newcommand{\sA}{{\scriptsize \cal A}}
\newcommand{\cJ}{{\cal J}}
\newcommand{\argmin}{\operatornamewithlimits{argmin}}
\newcommand{\hbbeta}{\widehat{\bbeta}}
\newcommand{\hbeta}{\widehat\beta}
\begin{document}
\maketitle

\begin{abstract}\vskip 3mm\footnotesize
\noindent Variable selection in high-dimensional space characterizes
many contemporary problems in scientific discovery and decision
making.  Many frequently-used techniques are based on independence
screening; examples include correlation ranking \citep{FL08} or feature
selection using a two-sample $t$-test in high-dimensional
classification \citep{TH03}. Within the context of the linear model,
\citet{FL08} showed that this simple correlation ranking possesses a
sure independence screening property under certain conditions and
that its revision, called iteratively sure independent screening
(ISIS), is needed when the features are marginally unrelated but
jointly related to the response variable. In this paper, we extend
ISIS, without explicit definition of residuals, to a general
pseudo-likelihood framework, which includes generalized linear
models as a special case. Even in the least-squares setting, the new
method improves ISIS by allowing variable deletion in the iterative
process.  Our technique allows us to select important features in
high-dimensional classification where the popularly used two-sample
$t$-method fails. A new technique is introduced to reduce the false
discovery rate in the feature screening stage.  Several simulated
and two real data examples are presented to illustrate the methodology.

\vskip 4.5mm

\noindent {\bf 2000 Mathematics Subject Classification:} 68Q32,
62J99.

\noindent {\bf Keywords and Phrases:} Classification,  feature
screening, generalized linear models,  robust regression, variable
selection.
\end{abstract}

\newpage

\section{Introduction} \label{Sec:Intro}
\vskip-5mm \hspace{5mm }

The remarkable development of computing power and other technology
has allowed scientists to collect data of unprecedented size and
complexity.  Examples include data from microarrays, proteomics,
brain images, videos, functional data and high-frequency financial
data. Such a demand from applications presents many new challenges
as well as opportunities for those in statistics and machine
learning, and while some significant progress has been made in
recent years, there remains a great deal to do.

A very common statistical problem is to model the relationship
between one or more output variables $Y$ and their associated
covariates (or predictors) $X_1, \ldots, X_p$, based on a sample of
size $n$.  A characteristic feature of many of the modern problems
mentioned in the previous paragraph is that the dimensionality $p$
is large, potentially much larger than $n$.  Mathematically, it
makes sense to consider $p$ as a function of $n$, which diverges to
infinity.  The dimensionality grows very rapidly when interactions
of the features are considered, which is necessary for many
scientific endeavors. For example, in disease classification using
microarray gene expression data \citep{TH03,FR06}, the number of
arrays is usually in the order of tens or hundreds while the number
of gene expression profiles is in the order of tens of thousands; in
the study of protein-protein interactions, the sample size may be in
the order of thousands, but the number of predictors can be in the
order of millions.

The phenomenon of noise accumulation in high-dimensional
classification and regression has long been observed by
statisticians and computer scientists (see \citet{HT01},
\citet{FF08} and references therein) and has been clearly demonstrated by
\citet{FF08}.  Various feature selection techniques have been
proposed.  A popular family of methods is based on penalized
least-squares or, more generally, penalized pseudo-likelihood.
Examples include the LASSO \citep{TI96}, SCAD \citep{FL01}, the
Dantzig selector \citep{CT07}, and their related methods.  These
methods have attracted a great deal of theoretical study and
algorithmic development recently. See \citet{DL03}, \citet{EH04},
\cite{ZO06}, \citet{MB06}, \citet{ZY06}, \citet{ZL08}, \citet{BR08}, and
references therein. However, computation inherent in those methods
makes them hard to apply directly to ultrahigh-dimensional statistical
learning problems, which involve the simultaneous challenges of
computational expediency, statistical accuracy, and algorithmic
stability.

A method that takes up the aforementioned three challenges is the
idea of independent learning, proposed and demonstrated by
\citet{FL08} in the regression context.  The method can be derived
from an empirical likelihood point of view \citep{HT08} and is
related to supervised principal component analysis \citep{BH06,
PB08}. In the important, but limited, context of the linear model,
\citet{FL08} proposed a two-stage procedure to deal with this
problem.  First, so-called independence screening is used as a fast
but crude method of reducing the dimensionality to a more moderate
size (usually below the sample size); then, a more sophisticated
technique, such as a penalized likelihood method based on the
smoothly clipped absolute deviation (SCAD) penalty, can be applied
to perform the final variable selection and parameter estimation
simultaneously.

Independence screening recruits those predictors having the best
marginal utility, which corresponds to the largest marginal
correlation with the response in the context of least-squares
regression. Under certain regularity conditions, \citet{FL08} show
surprisingly that this fast variable selection method has a sure
screening property; that is, with probability very close to 1, the
independence screening technique retains all of the important
variables in the model.  As a result of this theoretical
justification, the method is referred to as Sure Independence
Screening (SIS).  An important methodological extension, called
Iterated Sure Independence Screening (ISIS), covers cases where the
regularity conditions may fail, for instance if a predictor is
marginally uncorrelated, but jointly correlated with the response,
or the reverse situation where a predictor is jointly uncorrelated
but has higher marginal correlation than some important predictors.
Roughly, ISIS works by iteratively performing variable selection to
recruit a small number of predictors,  computing residuals based on
the model fitted using these recruited variables, and then using the
working residuals as the response variable to continue recruiting
new predictors.   The crucial step is to compute the working
residuals, which is easy for the least-squares regression problem but
not obvious for other problems. The improved performance of ISIS has
been documented in
\citet{FL08}.

Independence screening is a commonly used techniques for feature
selection.  It has been widely used for gene selection or disease
classification in bioinformatics.  In those applications, the genes
or proteins are called statistically significant if their associated
expressions in the treatment group differ statistically from the
control group, resulting in a large and active literature on the
multiple testing problem. See, for example, \citet{DS03} and
\citet{EF08}. The selected features are frequently used for
tumor/disease classification. See, for example, \citet{TH03}, and
\citet{FR06}. This screening method is indeed a form of independence
screening and has been justified by \citet{FF08} under some ideal
situations. However, common sense can carry us only so far. As
indicated above and illustrated further in Section~4.1, it is easy
to construct features that are marginally unrelated, but jointly
related with the response.   Such features will be screened out by
independent learning methods such as the two-sample $t$ test.  In
other words, genes that are screened out by test statistics can
indeed be important in disease classification and understanding
molecular mechanisms of the disease.  How can we construct better
feature selection procedures in ultrahigh dimensional feature space
than the independence screening popularly used in feature selection?

The first goal of this paper is to extend SIS and ISIS to much more
general models.  One challenge here is to make an appropriate
definition of a residual in this context.  We describe a procedure
that effectively sidesteps this issue and therefore permits the
desired extension of ISIS.   In fact, our method even improves the
original ISIS of \citet{FL08} in that it allows variable deletion in
the recruiting process.  Our methodology applies to a very general
pseudo-likelihood framework, in which the aim is to find the
parameter vector $\bbeta = (\beta_1,\ldots,\beta_p)^T$ that is
sparse and minimizes an objective function of the form
\[
Q(\bbeta) =   \sum_{i=1}^n L(Y_i, \bx_i^T \bbeta),
\]
where $(\bx_i^T,Y_i)$ are the covariate vector and response for the $i^{th}$ individual.
Important applications of this methodology, which is outlined in greater detail in
Section~\ref{Sec:Method}, include the following:
\begin{enumerate}
\item {\bf Generalized linear models}: All generalized linear models, including logistic
regression and Poisson log-linear models, fit very naturally into
our methodological framework. See \citet{MN89} for many applications
of generalized linear models. Note in particular that logistic
regression models yield a popular approach for studying
classification problems.  In Section~\ref{Sec:GLM}, we present
simulations in which our approach compares favorably with the
competing LASSO technique \citep{TI96}.
\item {\bf Classification}: Other common approaches to classification assume the response
takes values in $\{-1,1\}$ and also fit into our framework.  For instance, support vector machine
classifiers use the hinge loss function $L(Y_i, \bx_i^T \bbeta) = (1 - Y_i \bx_i^T \bbeta)_+$,
while the boosting algorithm AdaBoost uses  $L(Y_i, \bx_i^T \bbeta) = \exp(- Y_i \bx_i^T \bbeta)$.
\item {\bf Robust fitting}: In a high-dimensional linear model setting, it is advisable
to be cautious about the assumed relationship between the predictors and the response.
Thus, instead of the conventional least squares loss function, we may prefer a robust
loss function such as the $L_1$ loss $L(Y_i, \bx_i^T \bbeta) = |Y_i - \bx_i^T \bbeta|$
or the Huber loss \citep{HU64}, which also fits into our framework.

\end{enumerate}

Any screening method, by default, has a large false discovery rate
(FDR), namely, many unimportant features are selected after
screening.  A second aim of this paper, covered in
Section~\ref{Sec:Variants}, is to present two variants of the SIS
methodology, which reduce significantly the FDR.  Both are based on
partitioning the data into (usually) two groups.  The first has the
desirable property that in high-dimensional problems the probability
of incorrectly selecting unimportant variables is small. Thus this
method is particularly useful as a means of quickly identifying
variables that should be included in the final model. The second
method is less aggressive, and for the linear model has the same
sure screening property as the original SIS technique.
The applications of our proposed methods are illustrated in Section
5.

\section{ISIS methodology in a general framework} \label{Sec:Method}

Let $\by=\left(Y_1,\ldots,Y_n\right)^T$ be a vector of responses and
let $\bx_1,\ldots,\bx_n$ be their associated covariate (column)
vectors, each taking values in $\mathbb{R}^p$.  The vectors
$(\bx_1^T, Y_1),\ldots,(\bx_n^T, Y_n)$ are assumed to be independent
and identically distributed realizations from the population $(X_1,
\ldots, X_p, Y)^T$.  The $n \times p$ design matrix is $\bX =
(\bx_1,\ldots,\bx_n)^T$.


\subsection{Feature ranking by marginal utilities}

The relationship between $Y$ and $(X_1,\ldots,X_p)^T$ is often
modeled through a parameter vector $\bbeta =
(\beta_1,\ldots,\beta_p)^T$, and the fitting of the model amounts to
minimizing a negative pseudo-likelihood function of the form
\begin{equation}
\label{b1} Q(\beta_0,\bbeta) =  n^{-1} \sum_{i=1}^n L(Y_i, \beta_0 +
\bx_i^T \bbeta).
\end{equation}
Here, $L$ can be regarded as the loss of using $\beta_0 + \bx_i^T
\bbeta$ to predict $Y_i$. The marginal utility of the $j$-feature
is
\begin{equation}
L_j = \min_{\beta_0,\beta_j} n^{-1} \sum_{i=1}^n L(Y_i, \beta_0 +
X_{ij} \beta_j),   \label{b2}
\end{equation}
which minimizes the loss function, where $\bx_i =
(X_{i1},\ldots,X_{ip})^T$. The idea of SIS in this framework is to
compute the vector of marginal utilities $\bL = (L_1,\ldots,L_p)^T$
and rank them according to the marginal utilities:  the smaller the
more important.
 Note that in order to
compute $L_j$, we need only fit a model with two parameters,
$\beta_0$ and $\beta_j$, so computing the vector $\bL$ can be done
very quickly and stably, even for an ultrahigh dimensional problem. The
variable $X_j$ is selected by SIS if $L_j$ is one of the $d$
smallest components of $\bL$. Typically, we may take $d = \lfloor
n/\log n \rfloor$.

The procedure above is an independence screening method.  It
utilizes only a marginal relation between predictors and the
response variable to screen variables.  When $d$ is large enough, it
possesses the sure screening property.  For this reason, we call the
method \emph{Sure Independence Screening} (SIS).  For classification
problems with quadratic loss $L$, \citet{FL08} shows that SIS
reduces to feature screening using a two-sample $t$-statistic.
See also \citet{HT08} for a derivation from an empirical likelihood
point of view.

\subsection{Penalized pseudo-likelihood}

With variables crudely selected by SIS, variable selection and
parameter estimation can further be carried out simultaneously using a
more refined penalized (pseudo)-likelihood method, as we now
describe. The approach takes joint information into consideration. By
reordering the variables if necessary, we may assume without loss of
generality that $X_1,\ldots,X_d$ are the variables recruited by SIS.
We let $\bx_{i,d} = (X_{i1},\ldots,X_{id})^T$ and redefine $\bbeta =
(\beta_1,\ldots,\beta_d)^T$.  In the penalized likelihood approach, we
seek to minimize
\begin{equation}
\ell(\beta_0,\bbeta) = n^{-1} \sum_{i=1}^n L(Y_i, \beta_0 + \bx_{i,d}^T
\bbeta) + \sum_{j=1}^d p_{\lambda}(|\beta_j|). \label{b3}
\end{equation}
Here, $p_{\lambda}(\cdot)$ is a penalty function and $\lambda > 0$ is
a regularization parameter, which may be chosen by five-fold
cross-validation, for example.  The penalty function should satisfy
certain conditions in order for the resulting estimates to have
desirable properties, and in particular to yield sparse solutions in
which some of the coefficients may be set to zero; see \citet{FL01}
for further details.

Commonly used examples of penalty functions include the $L_1$
penalty $p_\lambda(|\beta|) = \lambda|\beta|$ \citep{TI96,PH07}, the
smoothly clipped absolute deviation (SCAD) penalty \citep{FL01},
which is a quadratic spline with $p_\lambda(0) = 0$ and
\[
p'_\lambda(|\beta|)=\lambda\biggl\{\mathbbm{1}_{\{|\beta|\leq\lambda\}}
+\frac{(a\lambda-|\beta|)_+}{(a-1)\lambda}
\mathbbm{1}_{\{|\beta|>\lambda\}}\biggr\},
\]
for some $a > 2$ and $|\beta| > 0$, and the minimum concavity
penalty (MCP), $p'_\lambda(\left|\beta\right|) = (\lambda -
\left|\beta\right|/a)_+$ \citep{ZH07}. The choice $a=3.7$ has been
recommended in \citet{FL01}. Unlike the $L_1$ penalty, SCAD and MC
penalty functions have flat tails, which are fundamental in reducing
biases due to penalization \citep{AF01, FL01}. \citet{PH07} describe
an iterative algorithm for minimizing the objective function for the
$L_1$ penalty, and \citet{ZH07} propose a PLUS algorithm for finding
solution paths to the penalized least-squares problem with a general
penalty $p_\lambda(\cdot)$. On the other hand, \citet{FL01} have
shown that the SCAD-type of penalized loss function can be minimized
iteratively using a local quadratic approximation,  whereas
\citet{ZL08} propose a local linear approximation, taking the
advantage of recently developed algorithms for penalized $L_1$
optimization \citep{EH04}. Starting from $\bbeta^{(0)}=0$ as
suggested by \citet{FL08}, using the local linear approximation
$$
    p_\lambda(|\beta|) \approx p_\lambda(|\beta^{(k)}|) +  p'_\lambda(|\beta^{(k)}|)
    (|\beta| - |\beta^{(k)}|),
$$
in (\ref{b3}), at the $(k+1)^{th}$ iteration we minimize the
weighted $L_1$ penalty
\begin{equation}
n^{-1} \sum_{i=1}^n L(Y_i, \beta_0 + \bx_{i,d}^T \bbeta) + \sum_{j=1}^d
w_j^{(k)}|\beta_j|, \label{b4}
\end{equation}
where $w_j^{(k)} = p_{\lambda}'(|\beta_j^{(k)}|)$.  Note that with
initial value $\bbeta^{(0)}=0$, $\bbeta^{(1)}$ is indeed a LASSO
estimate for the SCAD and MC penalty, since $p_\lambda'(0+)=\lambda$.
In other words, zero is not an absorbing state.  Though motivated
slightly differently, a weighted $L_1$ penalty is also the basis of
the adaptive Lasso \citep{ZO06}; in that case $w_j^{(k)} \equiv w_j =
1/|\hat{\beta}_j|^\gamma$, where $\hbbeta =
(\hat{\beta}_1,\ldots,\hat{\beta}_d)^T$ may be taken to be the maximum
likelihood estimator, and $\gamma > 0$ is chosen by the user.  The
drawback of such an approach is that zero is an absorbing state when
(\ref{b4}) is iteratively used --- components being estimated as zero
at one iteration will never escape from zero.

For a class of penalty functions that includes the SCAD penalty and
when $p$ is fixed as $n$ diverges, \citet{FL01} established an
oracle property; that is, the penalized estimates perform
asymptotically as well as if an oracle had told us in advance which
components of $\bbeta$ were non-zero. \citet{FP04} extended this
result to cover situations where $d$ may diverge with $d = d_n =
o(n^{1/5})$. \citet{ZO06} shows that the adaptive LASSO possesses
the oracle property too, when $d$ is finite.  See also further
theoretical studies by \citet{ZH08} and \citet{ZH07}.  We refer to
the two-stage procedures described above as SIS-Lasso, SIS-SCAD and
SIS-AdaLasso.

\subsection{Iterative feature selection}

The SIS methodology may break down if a predictor is marginally
unrelated, but jointly related with the response, or if a predictor is
jointly uncorrelated with the response but has higher marginal
correlation with the response than some important predictors.  In the
former case, the important feature has already been screened at the
first stage, whereas in the latter case, the unimportant feature is
ranked too high by the independent screening technique. ISIS seeks to
overcome these difficulties by using more fully the joint covariate
information while retaining computational expedience and stability as
in SIS.

In the first step, we apply SIS to pick a set $\mathcal{A}_1$ of
indices of size $k_1$, and then employ a penalized
(pseudo)-likelihood method such as Lasso, SCAD, MCP or the adaptive
Lasso to select a subset $\mathcal{M}_1$ of these indices.  This is
our initial estimate of the set of indices of important variables.
The screening stage solves only bivariate optimizations (\ref{b2})
and the fitting part solves only a optimization problem (\ref{b3})
with moderate size $k_1$. This is an attractive feature in ultrahigh
dimensional statistical learning.

Instead of computing residuals, as could be done in the linear
model, we compute
\begin{equation}
L_j^{(2)} = \min_{\beta_0,\sbbeta_{\mathcal{M}_1}, \beta_j} n^{-1}
\sum_{i=1}^n L(Y_i, \beta_0 + \bx_{i, \sM_1}^T \bbeta_{\sM_1} +
X_{ij} \beta_j),  \label{b5}
\end{equation}
for $j \in {\mathcal{M}}_1^c = \{1,\ldots,p\} \setminus
{\mathcal{M}}_1$, where $\bx_{i, \sM_1}$ is the sub-vector of
$\bx_i$ consisting of those elements in ${\mathcal{M}}_1$. This is
again a low-dimensional optimization problem which can easily be
solved.  Note that $L_j^{(2)}$ [after subtracting the constant
$\min_{\beta_0,\sbbeta_{\mathcal{M}_1}} n^{-1}\sum_{i=1}^n L(Y_i, \beta_0
+ \bx_{i, \sM_1}^T \bbeta_{\sM_1})$ and changing the sign of the
difference] can be interpreted as the additional contribution of
variable $X_j$ given the existence of variables in $\mathcal{M}_1$.
After ordering $\{L_j^{(2)}:j \in \sM_1^c\}$, we form the set
$\mathcal{A}_2$ consisting of the indices corresponding to the
smallest $k_2$ elements, say.  In this screening stage, an
alternative approach is to substitute the fitted value
$\hbbeta_{\sM_1}$ from the first stage into (\ref{b5}) and the
optimization problem (\ref{b5}) would only be bivariate. This
approach is exactly an extension of \citet{FL08} as  we have
$$
    L(Y_i, \beta_0 + \bx_{i, \sM_1}^T \widehat{\bbeta}_{\sM_1} + X_{ij}
    \beta_j) = (\hat{r}_i - \beta_0 - X_{ij}
    \beta_j)^2,
$$
for the quadratic loss, where $\hat{r}_i = Y_i - \bx_{i, \sM_1}^T
\widehat{\bbeta}_{\sM_1}$ is the residual from the previous step of
fitting.  The conditional contributions of features are more relevant
in recruiting variables at the second stage, but the computation is
more expensive.  Our numerical experiments in
Section~\ref{simu:linear} shows the improvement of such a deviation
from \citet{FL08}.

After the prescreening step, we use penalized likelihood to obtain
\begin{equation}
\hbbeta_2 = \argmin_{\beta_0,\sbbeta_{\sM_1},\sbbeta_{\sA_2}} n^{-1}
\sum_{i=1}^n L(Y_i, \beta_0 + \bx_{i, \sM_1}^T\bbeta_{\sM_1} +
\bx_{i, \sA_2}^T\bbeta_{\sA_2} )+ \sum_{j \in \sM_1 \cup \sA_2}
p_\lambda(|\beta_j|).   \label{b7}
\end{equation}
Again, the penalty term encourages a sparse solution. The indices of
$\hbbeta_2$ that are non-zero yield a new estimated set
${\mathcal{M}}_2$ of active indices.  This step also deviates
importantly from the approach in \citet{FL08} even in the
least-squares case. It allows the procedure to delete variables from
the previously selected variables with indices in ${\mathcal{M}}_1$.

The process, which iteratively recruits and deletes variables, can
then be repeated until we obtain a set of indices $\mathcal{M}_\ell$
which either has reached the prescribed size $d$, or satisfies
$\mathcal{M}_\ell = \mathcal{M}_{\ell - 1}$.  In our implementation,
we chose $k_1 = \lfloor 2d/3 \rfloor$, and thereafter at the $r$th
iteration, we took $k_r = d - |\mathcal{M}_{r-1}|$.  This ensures
that the iterated versions of SIS take at least two iterations to
terminate; another possibility would be to take, for example, $k_r =
\min(5,d - |\mathcal{M}_{r-1}|)$. Of course, we also obtain a final
estimated parameter vector $\hbbeta_\ell$. The above method can be
considered as an analogue of the least squares ISIS procedure
\citep{FL08} without explicit definition of the residuals.  In fact,
it is an improvement even for the least-squares problem.

\citet{FL08} showed empirically that for the linear model ISIS improves significantly
the performance of SIS in the difficult cases described above.  The
reason is that the fitting of the residuals from the $(r-1)^{th}$
iteration on the remaining predictors significantly weakens the
priority of those unimportant variables that are highly correlated
with the response through their associations with $\{X_j:j \in
\mathcal{M}_{r-1}\}$. This is due to the fact that the
variables $\{X_j:j \in \mathcal{M}_{r-1}\}$ have lower
correlation with the residuals than with the original responses.  It
also gives those important predictors that are missed in the previous
step a chance to survive.

\subsection{Generalized linear models}

Recall that we say that $Y$ is of exponential dispersion family form if its
density can be written in terms of its mean $\mu$ and a dispersion parameter $\phi$
as
\[
f_Y(y;\mu,\phi) = \exp\biggl\{\frac{y\theta(\mu)-b(\theta(\mu))}{\phi}
+c(y, \phi)\biggr\},
\]
from some known functions $\theta(\cdot)$, $b(\cdot)$ and $c(\cdot,
\cdot)$.  In a generalized linear model for independent responses
$Y_1,\ldots, Y_n$, we assert that the conditional density of $Y_i$
given the covariate vector $\bX_i = \bx_i$ is of exponential
dispersion family form, with the conditional mean response $\mu_i$
related to $\bx_i$ through $g(\mu_i) = \bx_i^T\bbeta$ for some known
link function $g(\cdot)$, and where the dispersion parameters are
constrained by requiring that $\phi_i = \phi a_i$, for some unknown
dispersion parameter $\phi$ and known constants $a_1,\ldots, a_n$.
For simplicity, throughout the paper, we take a constant dispersion
parameter.

It is immediate from the form of the likelihood function for a
generalized linear model that such a model fits within the
pseudo-likelihood framework of Section~\ref{Sec:GLM}.  In fact, we
have in general that
\begin{equation}
\label{Eq:GLMLoss}
L(Y_i, \bx_i^T \bbeta) = \sum_{i=1}^{n} \bigl\{b\bigl(\theta(g^{-1}(\bx_i^T\bbeta)\bigr) -
Y_i \theta\bigl(g^{-1}(\bx_i^T\bbeta)\bigr)\bigr\}.
\end{equation}
If we make the canonical choice of link function, $g(\cdot) = \theta(\cdot)$, then~(\ref{Eq:GLMLoss})
simplifies to
\[
L(Y_i, \bx_i^T \bbeta) = \sum_{i=1}^{n} \bigl\{b(\bx_i^T\bbeta) -
Y_i\bx_i^T\bbeta\bigr\}.
\]
An elegant way to handle classification problems is to assume the
class label takes values 0 or 1, and fit a logistic regression model.
For this particular generalized linear model, we have
\[
L(Y_i, \bx_i^T \bbeta) = \sum_{i=1}^{n} \{\log(1+e^{\bx_i^T\bbeta}) - Y_i\bx_i^T\bbeta\},
\]
while for Poisson log-linear models, we may take
\[
L(Y_i, \bx_i^T \bbeta) = \sum_{i=1}^{n} (e^{\bx_i^T\bbeta} - Y_i\bx_i^T\bbeta).
\]

\section{Reduction of false discovery rates}\label{Sec:Variants}

Sure independence screening approaches are simple and quick methods
to screen out irrelevant variables.  They are usually conservative
and include many unimportant variables. In this section, we outline
two possible variants of SIS and ISIS that have attractive
theoretical properties in terms of reducing the FDRs.  The first is an
aggressive variable selection method that is particularly useful
when the dimensionality is very large relative to the sample size;
the second is a more conservative procedure.

\subsection{First variant of ISIS}

It is convenient to introduce some new notation.  We write
$\mathcal{A}$ for the set of active indices -- that is, the set
containing those indices $j$ for which $\beta_j \neq 0$ in the true
model.  Write $X_{\mathcal{A}} = \{X_j:j \in \mathcal{A}\}$ and
$X_{\mathcal{A}^c} = \{X_j:j \in \mathcal{A}^c\}$ for the
corresponding sets of active and inactive variables respectively.

Assume for simplicity that $n$ is even, and split the sample into
two halves at random.
Apply SIS or ISIS separately to the data in each partition (with $d
= \lfloor n/\log n \rfloor$ or larger, say), yielding two estimates
$\widehat{\mathcal{A}}^{(1)}$ and $\widehat{\mathcal{A}}^{(2)}$ of
the set of active indices $\mathcal{A}$.  Both of them should have
large FDRs, as they are constructed from a crude screening method.
Assume that both sets have the sure screening property \citep{FL08}:
$$
   P( \mathcal{A} \subset \widehat{\mathcal{A}}^{(j)}) \to 1, \quad
   \mbox{for $j=1$ and 2.}
$$
Then, the active variables should appear in both sets with
probability tending to one.  We thus construct
$\widehat{\mathcal{A}} = \widehat{\mathcal{A}}^{(1)} \cap
\widehat{\mathcal{A}}^{(2)}$ as an estimate of $\mathcal{A}$.  This
estimate also possesses a sure screening property:
$$
    P( \mathcal{A} \subset \widehat{\mathcal{A}} ) \to 1.
$$
However, this estimate contains many fewer indices corresponding to
inactive variables, as such indices have to appear twice at random in the sets
$\widehat{\mathcal{A}}^{(1)}$ and $\widehat{\mathcal{A}}^{(2)}$.
This is indeed shown in Theorem~\ref{thm1} below.

Just as in the original formulation of SIS in
Section~\ref{Sec:Method}, we can now use a penalized
(pseudo)-likelihood method such as SCAD to perform final variable
selection from $\widehat{\mathcal{A}}$ and parameter estimation.  We
can even proceed without the penalization since the false discovery
rate is small.

In our theoretical support for this variant of SIS, we will make use of the following
condition:
\begin{description}
\item[\textbf{(A1)}] Let $r \in \mathbb{N}$, the set of natural numbers.  We say the model satisfies the
exchangeability condition at level $r$ if the set of random vectors
\[
\{(Y,X_{\mathcal{A}},X_{j_1},\ldots,X_{j_r}):j_1,\ldots,j_r \ \text{are distinct elements of} \ \mathcal{A}^c\}
\]
is exchangeable.
\end{description}
This condition ensures that each inactive variable is equally likely
to be recruited by SIS.  In Theorem~\ref{thm1} below, the case $r=1$
is particularly important, as it gives an upper bound on the
probability of recruiting any inactive variables into the model.
Note that this upper bound requires only the weakest version (level
1) of the exchangeability condition.
\begin{theorem}
\label{thm1} Let $r \in \mathbb{N}$, and assume the model satisfies
the exchangeability condition \textbf{(A1)} at level $r$.  If
$\widehat{\mathcal{A}}$ denotes the estimator of $\mathcal{A}$ from
the above variant of SIS, then
\[
P(|\widehat{\mathcal{A}} \cap \mathcal{A}^c| \geq r) \leq
\frac{\binom{d}{r}^2}{\binom{p-|\mathcal{A}|}{r}} \leq
\frac{1}{r!}\Bigl(\frac{d^2}{p-|\mathcal{A}|}\Bigr)^r,
\]
where, for the second inequality, we require $d^2 \leq p -
|\mathcal{A}|$ and $d$ is the prescribed number of selected
variables in $\widehat{\mathcal{A}}^{(1)}$ or
$\widehat{\mathcal{A}}^{(2)}$.
\end{theorem}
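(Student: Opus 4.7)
My plan is to move from the ``at least $r$'' probability to an expectation of a binomial coefficient, and then use the exchangeability condition together with independence between the two halves to get a factored bound. Concretely, I would start with the standard inequality
\[
P(|\widehat{\mathcal{A}} \cap \mathcal{A}^c| \geq r) \;\leq\; E\binom{|\widehat{\mathcal{A}} \cap \mathcal{A}^c|}{r},
\]
which holds because $\binom{k}{r} \geq 1$ whenever $k\geq r$. Then I would rewrite the binomial coefficient as a sum of indicators over subsets,
\[
\binom{|\widehat{\mathcal{A}} \cap \mathcal{A}^c|}{r} \;=\; \sum_{\substack{S\subset\mathcal{A}^c\\|S|=r}} \mathbbm{1}\{S\subset\widehat{\mathcal{A}}\},
\]
so that taking expectations reduces the task to bounding $P(S\subset\widehat{\mathcal{A}})$ uniformly in $S$.

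The next step uses the two key structural inputs. First, $\widehat{\mathcal{A}}=\widehat{\mathcal{A}}^{(1)}\cap\widehat{\mathcal{A}}^{(2)}$, and the two halves of the sample are independent by construction, so $P(S\subset\widehat{\mathcal{A}})=P(S\subset\widehat{\mathcal{A}}^{(1)})\,P(S\subset\widehat{\mathcal{A}}^{(2)})$. Second, the exchangeability assumption \textbf{(A1)} at level $r$ says that the joint distribution of $(Y,X_{\mathcal{A}},X_{j_1},\ldots,X_{j_r})$ is invariant under permutations among $r$-tuples of distinct indices in $\mathcal{A}^c$; since SIS/ISIS is a symmetric functional of the data, this forces $P(S\subset\widehat{\mathcal{A}}^{(j)})$ to take a common value $q_j$ for all $S\subset\mathcal{A}^c$ with $|S|=r$. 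Applied to $|\widehat{\mathcal{A}}^{(j)}\cap\mathcal{A}^c|\le d$, the identity $E\binom{|\widehat{\mathcal{A}}^{(j)}\cap\mathcal{A}^c|}{r}=\binom{p-|\mathcal{A}|}{r}\,q_j$ yields $q_j\le \binom{d}{r}/\binom{p-|\mathcal{A}|}{r}$. Summing over the $\binom{p-|\mathcal{A}|}{r}$ choices of $S$ gives the first inequality
\[
E\binom{|\widehat{\mathcal{A}} \cap \mathcal{A}^c|}{r} \;\le\; \binom{p-|\mathcal{A}|}{r}\,q_1 q_2 \;\le\; \frac{\binom{d}{r}^2}{\binom{p-|\mathcal{A}|}{r}}.
\]

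The second inequality is purely combinatorial. Writing $m=p-|\mathcal{A}|$, one has $\binom{d}{r}\le d^r/r!$ trivially, and I would derive the stated bound by a term-by-term comparison: the factor
\[
\frac{\binom{d}{r}^2/\binom{m}{r}}{(d^2/m)^r/r!} \;=\; \prod_{j=0}^{r-1}\frac{(d-j)^2\,m}{d^2(m-j)}
\]
is at most $1$ provided each factor is, which reduces to checking $m(2d-j)\ge d^2$ for $j=0,1,\dots,r-1$. Under the hypothesis $d^2\le m$ (and the natural constraint $r\le d$, without which the left side vanishes), $2d-j\ge d\ge 1$ forces $m(2d-j)\ge m\ge d^2$, so each factor is bounded by $1$.

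The main obstacle here is the conceptual exchangeability step: one must justify that the randomness in which inactive variables land inside $\widehat{\mathcal{A}}^{(j)}$ really is symmetric across $\mathcal{A}^c$, even though $\widehat{\mathcal{A}}^{(j)}$ is produced by a complicated iterative procedure. The rest is a counting argument and a clean application of independence of the two subsamples.
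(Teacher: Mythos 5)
Your proposal is correct and follows essentially the same route as the paper: a first-moment/union bound over $r$-subsets of $\mathcal{A}^c$, independence of the two subsamples to factor $P(S\subset\widehat{\mathcal{A}})$, exchangeability plus the deterministic count $|\widehat{\mathcal{A}}^{(j)}\cap\mathcal{A}^c|\le d$ to bound each subset probability by $\binom{d}{r}/\binom{p-|\mathcal{A}|}{r}$, and a term-by-term ratio comparison for the second inequality. The only difference is cosmetic (you phrase the union bound as $E\binom{|\widehat{\mathcal{A}}\cap\mathcal{A}^c|}{r}$ and use equality of the subset probabilities directly, where the paper uses a max-times-sum bound), so the arguments coincide.
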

\begin{proof}
Fix $r \in \mathbb{N}$, and let $\cJ =
\{(j_1,\ldots,j_r):j_1,\ldots,j_r \ \mbox{are distinct elements of}
\ \mathcal{A}^c\}$. Then
\begin{align*}
P(|\widehat{\mathcal{A}} \cap \mathcal{A}^c| \geq r) &\leq
\sum_{(j_1,\ldots,j_r) \in \cJ}
P(j_1 \in \widehat{\mathcal{A}}, \cdots, j_r \in \widehat{\mathcal{A}}) \\
&= \sum_{(j_1,\ldots,j_r) \in \cJ} P(j_1 \in
\widehat{\mathcal{A}}^{(1)}, \cdots, j_r \in
\widehat{\mathcal{A}}^{(1)})^2,
\end{align*}
in which we use the random splitting in the last equality.
Obviously, the last probability is bounded by
\begin{equation}
\max_{(j_1,\ldots,j_r) \in \cJ} P(j_1 \in
\widehat{\mathcal{A}}^{(1)}, \cdots, j_r \in
\widehat{\mathcal{A}}^{(1)}) \sum_{(j_1,\ldots,j_r) \in \cJ} P(j_1
\in \widehat{\mathcal{A}}^{(1)}, \cdots, j_r \in
\widehat{\mathcal{A}}^{(1)}).  \label{b8}
\end{equation}
Since there are at most $d$ inactive variables from $\mathcal{A}^c$
in the set $\widehat{\mathcal{A}}^{(1)}$, the number of r-tuples
from $\cJ$ falling in the set $\widehat{\mathcal{A}}^{(1)}$ can not
be more than the total number of such r-tuples in
$\widehat{\mathcal{A}}^{(1)}$, i.e.
\[
\sum_{(j_1,\ldots,j_r)\in \cJ} \mathbbm{1}_{\{j_1 \in
\widehat{\mathcal{A}}^{(1)}, \cdots, j_r \in
\widehat{\mathcal{A}}^{(1)}\}} \leq \binom{d}{r}.
\]
Thus, we have
\begin{equation}
\sum_{(j_1,\ldots,j_r) \in \cJ} P(j_1 \in
\widehat{\mathcal{A}}^{(1)}, \cdots, j_r \in
\widehat{\mathcal{A}}^{(1)}) \leq \binom{d}{r}. \label{b9}
\end{equation}
Substituting this into (\ref{b8}), we obtain
$$
  P(|\widehat{\mathcal{A}} \cap \mathcal{A}^c| \geq r) \leq
  \binom{d}{r} \max_{(j_1,\ldots,j_r) \in \cJ} P(j_1 \in \widehat{\mathcal{A}}^{(1)},
\cdots, j_r \in \widehat{\mathcal{A}}^{(1)}).
$$

Now, under the exchangeability condition ({\bf A1}), each $r$-tuple
of distinct indices in $\mathcal{A}^c$ is equally likely to be
recruited into $\widehat{\mathcal{A}}^{(1)}$.  Hence, it follows
from (\ref{b9}) that
\[
\max_{(j_1,\ldots,j_r) \in \cJ} P(j_1 \in
\widehat{\mathcal{A}}^{(1)}, \cdots, j_r \in
\widehat{\mathcal{A}}^{(1)}) \leq
\frac{\binom{d}{r}}{\binom{p-|\mathcal{A}|}{r}},
\]
and the first result follows.  The second result follows from the
simple fact that
$$
   \frac{(d-i)^2}{p^*-i} \leq \frac{d^2}{p^*}, \quad \mbox{for all $ 0
   \leq i \leq d$},
$$
where $p^* = p - |\mathcal{A}|$, and the simple calculation that
$$
\frac{\binom{d}{r}^2}{\binom{p^*}{r}} = \frac{1}{r!} \frac{d^2
(d-1)^2 \cdots (d-r+1)^2}{p^*(p^*-1) \cdots (p^*-r+1)} \leq
\frac{1}{r!} \left ( \frac{d}{p^*} \right )^r.
$$
This completes the proof.
\end{proof}

Theorem 1 gives a nonasymptotic bound, using only the symmetry
arguments.  From Theorem~\ref{thm1}, we see that if the
exchangeability condition at level 1 is satisfied and if $p$ is
large by comparison with $n^2$, then when the number of selected
variables $d \leq n$, we have with high probability this variant of
SIS reports no `false positives'; that is, it is very likely that
any index in the estimated active set also belongs to the active set
in the true model. The nature of this result is a little unusual in
that it suggests a `blessing of dimensionality' -- the bound on the
probability of false positives decreases with $p$. However, this is
only part of the full story, because the probability of missing
elements of the true active set is expected to increase with $p$.

Of course, it is possible to partition the data into $K > 2$ groups,
say, each of size $n/K$, and estimate $\mathcal{A}$ by
$\widehat{\mathcal{A}}^{(1)} \cap \widehat{\mathcal{A}}^{(2)} \cap
\ldots \cap \widehat{\mathcal{A}}^{(K)}$, where
$\widehat{\mathcal{A}}^{(k)}$ represents the estimated set of active
indices from the $k$th partition.  Such a variable selection
procedure would be even more aggressive than the $K=2$ version;
improved bounds in Theorem~\ref{thm1} could be obtained, but the
probability of missing true active indices would be increased.  As
the $K=2$ procedure is already quite aggressive, we consider this to
be the most natural choice in practice.

In the iterated version of this first variant of SIS, we apply
SIS to each partition separately to obtain two sets of indices $
\widehat{\mathcal{A}}_1^{(1)}$ and $\widehat{\mathcal{A}}_1^{(2)}$, each
having $k_1$ elements.  After forming the intersection
$\widehat{\mathcal{A}}_1 = \widehat{\mathcal{A}}_1^{(1)} \cap
\widehat{\mathcal{A}}_1^{(2)}$, we carry out penalized likelihood
estimation as before to give a first approximation
$\widehat{\mathcal{M}}_1$ to the true active set of variables.  We
then perform a second stage of the ISIS procedure, as outlined in
Section~\ref{Sec:Method}, to each partition separately to obtain sets
of indices $\widehat{\mathcal{M}}_1 \cup
\widehat{\mathcal{A}}_2^{(1)}$ and $\widehat{\mathcal{M}}_1 \cup
\widehat{\mathcal{A}}_2^{(2)}$.  Taking the intersection of these sets
and re-estimating parameters using penalized likelihood as
in~Section~\ref{Sec:Method} gives a second approximation
$\widehat{\mathcal{M}}_2$ to the true active set.  This process can be
continued until we reach an iteration $\ell$ with
$\widehat{\mathcal{M}}_\ell = \widehat{\mathcal{M}}_{\ell-1}$, or
we have recruited $d$ indices.

\subsection{Second variant of ISIS}\label{Sec:Var2}

Our second variant of SIS is a more conservative variable selection
procedure and also relies on random partitioning the data into $K=2$
groups as before.  Again, we apply SIS to each partition separately,
but now we recruit as many variables into equal-sized sets of active
indices $\widetilde{\mathcal{A}}^{(1)}$ and
$\widetilde{\mathcal{A}}^{(2)}$ as are required to ensure that the
intersection $\widetilde{\mathcal{A}} =
\widetilde{\mathcal{A}}^{(1)} \cap \widetilde{\mathcal{A}}^{(2)}$
has $d$ elements.  We then apply a penalized pseudo-likelihood
method to the variables $X_{\tilde{\mathcal{A}}} = \{X_j:j \in
\widetilde{\mathcal{A}}\}$ for final variable selection and
parameter estimation.

Theoretical support for this method can be provided in the case of the
linear model; namely, under certain regularity conditions, this
variant of SIS possesses the sure screening property.  More precisely,
if Conditions~(1)--(4) of \citet{FL08} hold with $2\kappa + \tau < 1$,
and we choose $d = \lfloor n/\log n \rfloor$, then there exists $C >
0$ such that
\[
P(\mathcal{A} \subseteq \tilde{\mathcal{A}}) = 1 - O\{\exp(-Cn^{1-2\kappa}/\log n
    + \log p)\}.
\]
The parameter $\kappa \geq 0$ controls the rate at which the minimum
signal $\min_{j \in \mathcal{A}} |\beta_j|$ is allowed to converge
to zero, while $\tau \geq 0$ controls the rate at which the maximal
eigenvalue of the covariance matrix $\Sigma =
\mathrm{Cov}(X_1,\ldots,X_p)$ is allowed to diverge to infinity.  In
fact, we insist that $\min_{j \in \mathcal{A}} |\beta_j| \geq
n^{-\kappa}$ and $\lambda_{\mathrm{max}}(\Sigma) \leq n^\tau$ for
large $n$, where $\lambda_{\mathrm{max}}(\Sigma)$ denotes the
maximal eigenvalue of $\Sigma$.  Thus, these technical conditions
ensure that any non-zero signal is not too small, and that the
predictors are not too close to being collinear, and the
dimensionality is also controlled via $\log p = o(n^{1-2\kappa}/\log
n)$, which is still of an exponential order. See \citet{FL08} for
further discussion of the sure screening property.

An iterated version of this algorithm is also available.  At the first
stage we apply SIS, taking enough variables in equal-sized sets of
active indices $\widetilde{\mathcal{A}}_1^{(1)}$ and
$\widetilde{\mathcal{A}}_1^{(2)}$ to ensure that the intersection
$\widetilde{\mathcal{A}}_1 =
\widetilde{\mathcal{A}}_1^{(1)} \cap \widetilde{\mathcal{A}}_1^{(2)}$ has $k_1$ elements.
Applying penalized likelihood to the variables with indices in
$\widetilde{\mathcal{A}}_1$ gives a first approximation
$\widetilde{\mathcal{M}}_1$ to the true set of active indices.  We
then carry out a second stage of the ISIS procedure of
Section~\ref{Sec:Method} to each partition separately to obtain
equal-sized new sets of indices $\widetilde{\mathcal{A}}_2^{(1)}$ and
$\widetilde{\mathcal{A}}_2^{(2)}$, taking enough variables to ensure
that $\widetilde{\mathcal{A}}_2 = \widetilde{\mathcal{A}}_2^{(1)}
\cap \widetilde{\mathcal{A}}_2^{(2)}$ has $k_2$ elements.  Penalized likelihood
applied to $\widetilde{\mathcal{M}}_1 \cap \widetilde{\mathcal{A}}_2$
gives a second approximation $\widetilde{\mathcal{M}}_2$ to the true
set of active indices.  As with the first variant, we continue until
we reach an iteration $\ell$ with $\widetilde{\mathcal{M}}_\ell =
\widetilde{\mathcal{M}}_{\ell-1}$, or we have recruited $d$ indices.

\section{Numerical results} \label{Sec:GLM}

 We illustrate the breadth of applicability of (I)SIS and its
variants by studying its performance on simulated data in four
different contexts: logistic regression, Poisson regression, robust
regression (with a least absolute deviation criterion) and
multi-class classification with support vector machines.  We will
consider three different configurations of the $p=1000$ predictor
variables $X_1,\ldots,X_p$:

\begin{description}
\item[Case 1:] $X_1,\ldots,X_p$ are independent and identically
distributed $N(0,1)$ random variables
\item[Case 2:] $X_1,\ldots,X_p$ are jointly Gaussian, marginally
$N(0,1)$, and with $\mbox{corr}(X_i, X_4)=1/\sqrt{2}$ for all $i \neq
4$ and $\mbox{corr}(X_i, X_j) = 1/2$ if $i$ and $j$ are distinct elements
of $\{1,\ldots,p\} \setminus \{4\}$
\item[Case 3:] $X_1,\ldots,X_p$ are jointly Gaussian, marginally
$N(0,1)$, and with $\mbox{corr}(X_i, X_5)=0$ for all $i \neq 5$,
$\mbox{corr}(X_i, X_4)=1/\sqrt{2}$ for all $i \notin \{4, 5\}$, and
$\mbox{corr}(X_i, X_j) = 1/2$ if $i$ and $j$ are distinct elements
of $\{1,\ldots,p\} \setminus \{4,5\}$.
\end{description}
Case 1, with independent predictors, is the most straightforward for
variable selection.  In Cases~2 and~3, however, we have serial
correlation such that $\mbox{corr}(X_i, X_j)$ does not decay as
$|i-j|$ increases. We will see later that for both Case 2 and Case 3 the true coefficients are chosen such that the response is marginally uncorrelated with $X_4$. We therefore expect variable selection in these
situations to be more challenging, especially for the non-iterated
versions of SIS.  Notice that in the asymptotic theory of SIS in
\citet{FL08}, this type of dependence is ruled out by their
Condition~(4).

\subsection{Logistic regression}
\label{Sec:LogReg}

In this example, the data $(\bx_1^T,Y_1),\ldots,(\bx_n^T,Y_n)$ are
independent copies of a pair $(\bx^T,Y)$, where $Y$ is distributed,
conditional on $\bX =\bx$, as $\mathrm{Bin}(1,p(\bx))$, with
$\log\bigl(\frac{p(\bx)}{1 - p(\bx)}\bigr) = \beta_0 + \bx^T \bbeta$.
We choose $n=400$.

The binary response of the logistic regression model is less
informative than, say, the real-valued response in a linear model,
which explains the larger sample size in this example than in those
that follow.  It was also the reason for choosing $d = \lfloor
\frac{n}{4 \log n}\rfloor = 16$ in both the vanilla version of SIS outlined
in Section~\ref{Sec:Method} (Van-SIS), and the second variant
(Var2-SIS) in Section~\ref{Sec:Var2}.  For the first variant
(Var1-SIS), however, we used $d = \lfloor \frac{n}{\log n}\rfloor = 66$;
note that since this means the selected variables are in the
intersection of two sets of size $d$, we typically end up with far
fewer than $d$ variables selected by this method.

For the logistic regression example, the choice of final regularization
parameter $\lambda$ for the SCAD penalty (after all (I)SIS steps) was
made by means of an independent tuning data set of size $n$, rather
than by cross-validation.  This also applies for the LASSO and Nearest Shrunken Centroid \cite[NSC,][]{TH03} methods
which we include for comparison; instead of using SIS, this method
regularizes the log-likelihood with an $L_1$-penalty.  The reason for
using the independent tuning data set is that the lack of information
in the binary response means that cross-validation is particularly
prone to overfitting in logistic regression.

The coefficients used in each of the three cases were as follows:
\begin{description}
\item[Case 1:] $\beta_0=0$, $\beta_1=1.2439$, $\beta_2=-1.3416$, $\beta_3=-1.3500$, $\beta_4=-1.7971$, $\beta_5=-1.5810$, $\beta_6=-1.5967$, and $\beta_j=0$ for $j>6$.  The corresponding Bayes test error is $0.1368$.
\item[Case 2:] $\beta_0=0$, $\beta_1=4$, $\beta_2=  4$, $\beta_3=  4$, $\beta_4= -6\sqrt{2}$, and $\beta_j=0$ for $j>4$.  The Bayes test error is $0.1074$.
\item[Case 3:] $\beta_0=0$, $\beta_1=4$, $\beta_2=  4$, $\beta_3=  4$, $\beta_4= -6\sqrt{2}$, $\beta_5=4/3$, and $\beta_j=0$ for $j>5$.  The Bayes test error is $0.1040$.
\end{description}
In Case~1, the coefficients were chosen randomly, and were generated
as $(4\log n/\sqrt{n}+|Z|/4)U$ with $Z\sim N(0, 1)$ and $U=1$ with
probability $0.5$ and $-1$ with probability $-0.5$, independent of
$Z$.  For Cases~2 and~3, the choices ensure that even though
$\beta_4 \neq 0$, we have that $X_4$ and $Y$ are independent.  The
fact that $X_4$ is marginally independent of the response is
designed to make it difficult for a popular method such as the
two-sample $t$ test or other independent learning methods to
recognize this important variable. Furthermore, for Case 3, we add
another important variable $X_5$ with a small coefficient to make it
even more difficult to identify the true model. For Case 2, the
ideal variables picked up by the two sample test or independence
screening technique are $X_1$, $X_2$ and $X_3$.  Using these
variables to build the ideal classifier, the Bayes risk is $0.3443$,
which is much larger than the Bayes error $0.1074$ of the true model
with $X_1, X_2, X_3, X_4$. In fact one may exaggerate Case 2 to make
the Bayes error using the independence screening technique close to
$0.5$, which corresponds to random guessing, by setting $\beta_0=0$,
$\beta_1=\beta_2=\beta_3=a$, $\beta_m=a$ for $m=5, 6, \cdots, j$,
$\beta_4=-a(j-1)\sqrt{2}/2$, and $\beta_m=0$ for $m>j$. For example,
the Bayes error using the independence screening technique, which
deletes $X_4$, is $0.4608$ when $j=60$ and $a=1$ while the
corresponding Bayes error using $X_m$, $m=1, 2, \cdots, 60$ is
$0.0977$.

In the tables below, we report several performance measures, all of
which are based on 100 Monte Carlo repetitions.  The first two rows
give the median $L_1$ and squared $L_2$ estimation errors
$\|\bbeta-\widehat{\bbeta}\|_1=\sum_{j=0}^p |\beta_j-\hat \beta_j|$
and $\|\bbeta-\widehat{\bbeta}\|_2^2 = \sum_{j=0}^p (\beta_j-\hat
\beta_j)^2$.  The third row gives the proportion of times that the
(I)SIS procedure under consideration includes all of the important
variables in the model, while the fourth reports the corresponding
proportion of times that the final variables selected, after
application of the SCAD or LASSO penalty as appropriate, include all
of the important ones.  The fifth row gives the median final number of
variables selected.  Measures of fit to the training data are provided
in the sixth, seventh and eighth rows, namely the median values of
$2Q(\hat{\beta}_0,\widehat{\bbeta})$, defined in~(\ref{b1}), Akaike's
information criterion \citep{Akaike1974}, which adds twice the number
of variables in the final model, and the Bayesian information
criterion \citep{Schwarz1978}, which adds the product of $\log n$ and
the number of variables in the final model.  Finally, an independent
test data set of size $100n$ was used to evaluate the median value of
$2Q(\hat{\beta}_0,\widehat{\bbeta})$ on the test data (Row 9), as well
as to report the median 0-1 test error (Row 10), where we observe an
error if the test response differs from the fitted response by more
than $1/2$.

Table~\ref{Table:Log1} compares five methods, Van-SIS, Var1-SIS,
Var2-SIS, LASSO, and NSC.  The most noticeable observation is that
while the LASSO always includes all of the important variables, it
does so by selecting very large models -- a median of 94 variables,
as opposed to the correct number, 6, which is the median model size
reported by all three SIS-based methods.  This is due to the bias of
the LASSO, as pointed out by \cite{FL01} and \cite{ZO06}, which
encourages the choice of a small regularization parameter to make
the overall mean squared error small.  Consequently, many unwanted
variables are also recruited.  Thus the LASSO method has large
estimation error, and while $2Q(\hat{\beta}_0,\widehat{\bbeta})$ is
small on the training data set, this is a result of overfit, as seen
by the large values of AIC/BIC, $2Q(\hat{\beta}_0,\widehat{\bbeta})$
on the test data and the 0-1 test error.

As the predictors are independent in Case~1, it is unsurprising to see
that Van-SIS has the best performance of the three SIS-based methods.
Even with the larger value of $d$ used for Var1-SIS, it tends to miss
important variables more often than the other methods.  Although the
method appears to have value as a means of obtaining a minimal set of
variables that should be included in a final model, we will not
consider Var1-SIS further in our simulation study.




\begin{table}[ht!]
\begin{center}
\caption{Logistic regression, Case 1}
\begin{tabular}{|l|lllll|}
\hline
 &  Van-SIS  & Var1-SIS  & Var2-SIS  & LASSO & NSC\\
 \hline
$\|\bbeta-\widehat\bbeta\|_1$             &      1.1093 &       1.2495 &     1.2134 &     8.4821 & N/A\\
$\|\bbeta-\widehat\bbeta\|_2^2$           &      0.4861 &       0.5237 &     0.5204 &     1.7029 & N/A\\
Prop. incl. (I)SIS models         &      0.99   &       0.84   &     0.91   &     N/A   & N/A\\
Prop. incl. final models          &      0.99   &       0.84   &     0.91   &     1.00  & 0.34    \\
Median final model size               &      6      &       6      &     6      &     94  & 3   \\
$2Q(\hat{\beta}_0,\widehat{\bbeta})$  (training) &      237.21 &       247.00 &     242.85 &     163.64 & N/A \\
AIC                                   &      250.43 &       259.87 &     256.26 &     352.54 & N/A \\
BIC                                   &    277.77 &        284.90 &       282.04 &       724.70 & N/A\\
$2Q(\hat{\beta}_0,\widehat{\bbeta})$ (test) &271.81 &       273.08 &     272.91 &     318.52 & N/A \\
0-1 test error                        &      0.1421 &       0.1425 &     0.1426 &     0.1720  & 0.3595\\
\hline
\end{tabular}
\label{Table:Log1}
\end{center}
\end{table}

In Cases~2 and~3, we also consider the iterated versions of Van-SIS
and Var2-SIS, which we denote Van-ISIS and Var2-ISIS respectively.  At
each intermediate stage of the ISIS procedures, the Bayesian
information criterion was used as a fast way of choosing the SCAD
regularization parameter.

From Tables~\ref{Table:Log2} and~\ref{Table:Log3}, we see that the
non-iterated SIS methods fail badly in these awkward cases.  Their
performance is similar to that of the LASSO method.  On the other hand,
both of the iterated methods Van-ISIS and Var2-ISIS perform extremely
well (and similarly to each other).

\begin{table}[ht!]
\begin{center}
\caption{Logistic regression, Case 2}
\begin{tabular}{|l|llllll|}
\hline
 &  Van-SIS  & Van-ISIS  & Var2-SIS & Var2-ISIS & LASSO & NSC \\
 \hline
$\|\bbeta-\widehat\bbeta\|_1$            &     20.0504 &   1.9445 &  20.1100 &  1.8450 &  21.6437 & N/A\\
$\|\bbeta-\widehat\bbeta\|_2^2$          &      9.4101 &   1.0523 &  9.3347  &  0.9801 &   9.1123 & N/A\\
Prop. incl. (I)SIS models                &         0.00 &    1.00 &    0.00  &    1.00 &   N/A  & N/A\\
Prop. incl. final models                 &         0.00 &    1.00 &    0.00  &    1.00 &   0.00 & 0.21\\
Median final model size                  &          16 &        4 &      16  &       4 &    91  & 16.5\\
$2Q(\hat{\beta}_0,\widehat{\bbeta})$  (training)& 307.15 & 187.58 & 309.63 & 187.42 & 127.05 & N/A \\
AIC                                      &    333.79   &   195.58 & 340.77 & 195.58 & 311.10 & N/A\\
BIC                                      &    386.07 &    211.92 &         402.79 &    211.55 &    672.34 & N/A\\
$2Q(\hat{\beta}_0,\widehat{\bbeta})$ (test) & 344.25 &     204.23 & 335.21 & 204.28 & 258.65 & N/A\\
0-1 test error                           &    0.1925 &     0.1092 & 0.1899 & 0.1092 & 0.1409 & 0.3765\\
\hline
\end{tabular}
\label{Table:Log2}
\end{center}
\end{table}

\begin{table}[ht!]
\begin{center}
\caption{Logistic regression, Case 3}
\begin{tabular}{|l|llllll|}
\hline
 &  Van-SIS  & Van-ISIS  & Var2-SIS & Var2-ISIS & LASSO & NSC \\
 \hline
$\|\bbeta-\widehat\bbeta\|_1$               &     20.5774 &    2.6938 &  20.6967 &  3.2461 &  23.1661 & N/A\\
$\|\bbeta-\widehat\bbeta\|_2^2$             &      9.4568 &    1.3615 &   9.3821 &  1.5852 &   9.1057 & N/A\\
Prop. incl. (I)SIS models               &          0.00   &    1.00   &   0.00   &  1.00   &   N/A   & N/A\\
Prop. incl. final models                &          0.00   &    0.90   &   0.00   &  0.98   &   0.00  & 0.17\\
Median final model size                 &          16 &         5 &        16&        5&   101.5  & 10\\
$2Q(\hat{\beta}_0,\widehat{\bbeta})$  (training)&269.20 &    187.89&  296.18 &    187.89 & 109.32 & N/A\\
AIC                                     &    289.20   &   197.59 &  327.66  &    198.65 & 310.68 & N/A\\
BIC                                     &    337.05 &    218.10 &           389.17 &    219.18 &    713.78 & N/A\\
$2Q(\hat{\beta}_0,\widehat{\bbeta})$ (test)& 360.89  &    225.15  &  358.13  &    226.25 & 275.55 & N/A\\
0-1 test error                          &      0.1933 &   0.1120 &   0.1946 &     0.1119 & 0.1461  & 0.3866\\
\hline
\end{tabular}
\label{Table:Log3}
\end{center}
\end{table}

\subsection{Poisson regression}

In our second example, the generic response $Y$ is distributed,
conditional on $\bX = \bx$, as $\mathrm{Poisson}(\mu(\bx))$, where
$\log \ \mu(\bx) = \beta_0 + \bx^T \bbeta$.

Due to the extra information in the count response, we choose $n=200$,
and apply all versions of (I)SIS with $d = \lfloor \frac{n}{2 \log
n}\rfloor = 37$.  We also use 10-fold cross-validation to choose the
final regularization parameter for the SCAD and LASSO penalties.  The
coefficients used were as follows:
\begin{description}
\item[Case 1:] $\beta_0=5$, $\beta_1= -0.5423$, $\beta_2=  0.5314$, $\beta_3= -0.5012$, $\beta_4=-0.4850$, $\beta_5= -0.4133$, $\beta_6=0.5234$, and $\beta_j=0$ for $j>6$.
\item[Case 2:] $\beta_0=5$, $\beta_1=0.6$, $\beta_2=  0.6$, $\beta_3=  0.6$, $\beta_4= -0.9\sqrt{2}$, and $\beta_j=0$ for $j>4$.
\item[Case 3:] $\beta_0=5$, $\beta_1=0.6$, $\beta_2=  0.6$, $\beta_3=  0.6$, $\beta_4= -0.9\sqrt{2}$, $\beta_5=0.15$, and $\beta_j=0$ for $j>5$.
\end{description}
In Case~1, the magnitudes of the coefficients $\beta_1,\ldots,\beta_6$
were generated as $(\frac{\log n}{\sqrt{n}}+|Z|/8)U$ with $Z\sim N(0,
1)$ and $U=1$ with probability $0.5$ and $-1$ with probability $0.5$,
independently of $Z$.  Again, the choices in Cases~2 and~3 ensure
that, even though $\beta_4 \neq 0$, we have $\mbox{corr}(X_4, Y) = 0$.

The results are shown in Tables~\ref{Table:Poi1}, \ref{Table:Poi2}
and~\ref{Table:Poi3}.  Even in Case~1, with independent predictors,
the ISIS methods outperform SIS, so we chose not to present the
results for SIS in the other two cases.  Again, both Van-ISIS and
Var2-ISIS perform extremely well, almost always including all the
important variables in relatively small final models.  The LASSO
method continues to suffer from overfitting, particularly in the
difficult Cases~2 and~3.

\begin{table}[ht!]
\begin{center}
\caption{Poisson regression, Case 1}
\begin{tabular}{|l|lllll|}
\hline
 &  Van-SIS  & Van-ISIS  & Var2-SIS & Var2-ISIS & LASSO \\
 \hline
$\|\bbeta-\widehat\bbeta\|_1$    &      0.0695 &      0.1239 &   1.1773 &      0.1222 &      0.1969\\
$\|\bbeta-\widehat\bbeta\|_2^2$  &      0.0225 &      0.0320 &        0.4775 &      0.0330 &      0.0537\\
Prop. incl. (I)SIS models       &        0.76 &    1.00 &          0.45 &       1.00 &       N/A \\
Prop. incl. final models        &        0.76 &    1.00 &          0.45 &       1.00 &       1.00 \\
Median final model size         &        12 &        18 &          13 &        17  &        27 \\
$2Q(\hat{\beta}_0,\widehat{\bbeta})$  (training)& 1560.85 & 1501.80 & 7735.51 & 1510.38 &   1534.19\\
AIC                               &   1586.32 &   1537.80 &     7764.51 &   1542.14 &   1587.23\\
BIC                               &   1627.06 &   1597.17 &           7812.34 &   1595.30 &   1674.49\\
$2Q(\hat{\beta}_0,\widehat{\bbeta})$ (test) &   1557.74 &   1594.10 &    14340.26 &   1589.51 &   1644.63\\
\hline
\end{tabular}
\label{Table:Poi1}
\end{center}
\end{table}

\begin{table}[ht!]
\begin{center}
\caption{Poisson regression, Case 2}
\begin{tabular}{|l|lll|}
\hline
 & Van-ISIS  & Var2-ISIS & LASSO \\
 \hline
$\|\bbeta-\widehat\bbeta\|_1$    &     0.2705 &       0.2252 &      3.0710\\
$\|\bbeta-\widehat\bbeta\|_2^2$  &     0.0719 &       0.0667 &      1.2856\\
Prop. incl. (I)SIS models    &     1.00  &        0.97 &      N/A \\
Prop. incl. final models     &     1.00 &         0.97 &      0.00 \\
Median final model size     &      18 &           16   &      174 \\
$2Q(\hat{\beta}_0,\widehat{\bbeta})$  (training) & 1494.53 & 1509.40 & 1369.96\\
AIC                         &   1530.53 & 1541.17 &   1717.91\\
BIC                         &   1589.90 &           1595.74 &   2293.29\\
$2Q(\hat{\beta}_0,\widehat{\bbeta})$  (test) &  1629.49 & 1614.57 & 2213.10\\
\hline
\end{tabular}
\label{Table:Poi2}
\end{center}
\end{table}

\begin{table}[ht!]
\begin{center}
\caption{Poisson regression, Case 3}
\begin{tabular}{|l|lll|}
\hline
 &  Van-ISIS  & Var2-ISIS & LASSO \\
 \hline
$\|\bbeta-\widehat\bbeta\|_1$  &      0.2541 &    0.2319 &      3.0942\\
$\|\beta-\widehat\bbeta\|_2^2$ &      0.0682 &    0.0697 &      1.2856\\
Prop. incl. (I)SIS models  &      0.97 &      0.91   &      0.00\\
Prop. incl. final models   &      0.97 &      0.91   &      0.00\\
Median final model size    &      18 &        16   &       174 \\
$2Q(\hat{\beta}_0,\widehat{\bbeta})$  (training)& 1500.03 & 1516.14 & 1366.63\\
AIC                        &   1536.03 &   1546.79 &   1715.35 \\
BIC                        &   1595.40 &     1600.17 &   2293.60\\
$2Q(\hat{\beta}_0,\widehat{\bbeta})$  (test) & 1640.27 & 1630.58 & 2389.09\\
\hline
\end{tabular}
\label{Table:Poi3}
\end{center}
\end{table}

\subsection{Robust regression}

We have also conducted similar numerical experiments using
$L_1$-regression for the three cases in an analogous manner to the
previous two examples.  We obtain similar results. Both versions of
ISIS are effective in selecting important variables with relatively
low false positive rates.  Hence, the prediction errors are also
small. On the other hand, LASSO missed the difficult variables in
cases 2 and 3 and also selected models with a large number of
variables to attenuate the bias of the variable selection procedure.
As a result, its prediction errors are much larger.  To save space, we
omit the details of the results.

\subsection{Linear regression}\label{simu:linear}

Note that our new ISIS procedure allows variable deletion in each
step.  It is an important improvement over the original proposal of
\cite{FL08} even in the ordinary least-squares setting.  To
demonstrate this, we choose Case 3, the most difficult one, with
coefficients given as follows.
\begin{description}
\item[Case 3:] $\beta_0=0$, $\beta_1=5$, $\beta_2=  5$, $\beta_3=  5$, $\beta_4= -15\sqrt{2}/2$, $\beta_5=1$, and $\beta_j=0$ for $j>5$.
\end{description}
The response $Y$ is set as $Y=\bx^T \bbeta+\epsilon$ with
independent $\epsilon\sim N(0, 1)$. This model is the same as
Example 4.2.3 of \cite{FL08}.  Using $n=70$ and $d=n/2$, our new
ISIS method includes all five important variables for 91 out of the
100 repetitions, while the original ISIS without variable deletion
includes all the important variables for only 36 out of the 100
repetitions.  The median model size of our new variable selection
procedure with variable deletion  is 21, whereas the median model
size corresponding to the original ISIS of \cite{FL08} is 19.

We have also conducted the numerical experiment with a different
sample size $n=100$ and $d=n/2=50$. For 97 out of 100 repetitions,
our new ISIS includes all the important variables while ISIS without
variable deletion includes all the important variables for only 72
repetitions. Their median model sizes are both 26. This clearly
demonstrates the improvement of allowing variable deletion in this
example.

\subsection{Multicategory classification} \label{simu:multi:class}

Our final example in this section is a four-class classification
problem.  Here we study two different predictor configurations, both
of which depend on first generating independent
$\tilde{X}_1,\ldots,\tilde{X}_p$ such that
$\tilde{X}_1,\ldots,\tilde{X}_4$ are uniformly distributed on
$[-\sqrt{3},\sqrt{3}]$, and $\tilde{X}_5,\ldots,\tilde{X}_p$ are distributed as
$N(0,1)$.  We use these random variables to generate the following
cases:
\begin{description}
\item[Case 1:] $X_j = \tilde{X}_j$ for $j=1,\ldots,p$
\item[Case 2:] $X_1 = \tilde{X}_1 - \sqrt{2}\tilde{X}_5$, $X_2 = \tilde{X}_2 + \sqrt{2}\tilde{X}_5$,
$X_3 = \tilde{X}_3 - \sqrt{2}\tilde{X}_5$, $X_4 = \tilde{X}_4 + \sqrt{2}\tilde{X}_5$, and
$X_j = \sqrt{3}\tilde{X}_j$ for $j=5,\ldots,p$.
\end{description}
Conditional on $\bX = \bx$, the response $Y$ was generated according
to $P(Y=k|\widetilde{\bX} = \tilde{\bx}) \propto
\exp\{f_k(\tilde{\bx})\}$, for $k=1,\ldots,4$, where $f_1(\tilde{\bx}) = -a\tilde{x}_1
+ a\tilde{x}_4$, $f_2(\tilde{\bx}) = a\tilde{x}_1 - a\tilde{x}_2$,
$f_3(\tilde{\bx}) = a\tilde{x}_2 - a\tilde{x}_3$ and $f_4(\tilde{\bx})
= a\tilde{x}_3 - a\tilde{x}_4$ with $a=5/\sqrt{3}$.

In both Case 1 and Case 2, all predictors have the same standard
deviation since $\mathrm{sd}(X_j) = 1$ for $j=1,2, \cdots, p$ in
Case~1 and $\mathrm{sd}(X_j) = \sqrt{3}$ for $j=1,2, \cdots, p$ in
Case~2.  Moreover, for this case, the variable $X_5$ is marginally
unimportant, but jointly significant, so it represents a challenge to
identify this as an important variable. For both Case 1 and Case 2,
the Bayes error is $0.1373$.

For the multicategory classification we use the loss function proposed
by \cite{LeeLW04}.  Denote the coefficients for the $k$th class by
$\beta_{0k}$ and $\bbeta_{k}$ for $k=1, 2, 3, 4$, and let
$\mathbf{B}=((\beta_{01}, \bbeta_1^T)^T, (\beta_{02}, \bbeta_2^T)^T,
(\beta_{03}, \bbeta_3^T)^T, (\beta_{04}, \bbeta_4^T)^T)$. Let
$f_k(\bx)\equiv f_k(\bx, \beta_{0k},
\bbeta_k)=\beta_{0k}+\bx^T\bbeta_{k}$, $k=1, 2, 3, 4$, and
$$\bff(\bx)\equiv \bff(\bx, \mathbf{B})=(f_1(\bx), f_2(\bx), f_3(\bx),
f_4(\bx))^T.$$ The loss function is given by
$L(Y, \bff(\bx))=\sum_{j\ne Y} \left[1+f_j(\bx)\right]_+$, where
$[\psi]_+=\psi$ if $\psi\geq0$ and $0$ otherwise. Deviating slightly
from our standard procedure, the marginal utility of the $j$-feature
is defined by $$L_j=\min_{\mathbf{B}} \sum_{i=1}^{n} L(Y_i,
\bff(X_{ij}, \mathbf{B}))+\frac{1}{2}\sum_{k=1}^{4}\beta_{jk}^2$$ to
avoid possible unidentifiablity issues due to the hinge loss
function. Analogous modification is applied to (\ref{b5}) in the
iterative feature selection step. With estimated coefficients $\hat
\beta_{0k}$ and $\hbbeta_{k}$, and $\hat
f_k(\bx)=\hat\beta_{0k}+\bx^T\hbbeta_{k}$ for $k=1, 2, 3, 4$, the
estimated classification rule is given by $\mbox{argmax}_{k} \hat
f_{k}(\bx)$. There are some other appropriate multi-category loss
functions such as the one proposed by \cite{LiuShenDoss05}.

As with the logistic regression example in Section~\ref{Sec:LogReg},
we use $n=400$, $d=\lfloor \frac{n}{4\log n}\rfloor =16$ and an independent tuning data set of size $n$ to pick
the final regularization parameter for the SCAD penalty.

The results are given in Table~\ref{Table:MultiCat}.  The mean
estimated testing error was based on a further testing data set of
size $200n$, and we also report the standard error of this mean
estimate.  In the case of independent predictors, all (I)SIS methods
have similar performance.  The benefits of using iterated versions of
the ISIS methodology are again clear for Case 2, with dependent
predictors.
\begin{table}[ht!]
\begin{center}
\caption{Multicategory classification}
\begin{tabular}{|l|llllll|}
\hline
 &  Van-SIS  & Van-ISIS  & Var2-SIS & Var2-ISIS & LASSO & NSC \\
 \hline
& \multicolumn{5}{c}{Case 1} &\\
\hline
Prop. incl. (I)SIS models & 1.00 & 1.00 & 0.99 & 1.00 & N/A & N/A \\
Prop. incl. final model & 1.00 & 1.00 & 0.99 & 1.00 & 0.00 & 0.68\\
Median modal size        &  2.5 &    4 &   10 &   5 & 19 & 4  \\
0-1 test error & 0.3060 & 0.3010 & 0.2968 & 0.2924 & 0.3296 & 0.4524\\
Test error standard error & 0.0067 & 0.0063 & 0.0067 & 0.0061 & 0.0078 & 0.0214\\
\hline
& \multicolumn{5}{c}{Case 2} & \\
\hline
Prop. incl. (I)SIS models & 0.10 & 1.00 & 0.03 & 1.00 & N/A & N/A \\
Prop. incl. final models & 0.10 & 1.00 & 0.03 & 1.00 & 0.33 & 0.30\\
Median modal size        &   4  &  11  &   5  &   9   & 54 & 9  \\
0-1 test error & 0.4362 & 0.3037 & 0.4801 & 0.2983 & 0.4296 & 0.6242 \\
Test error standard error & 0.0073 & 0.0065 & 0.0083 & 0.0063 & 0.0043 &  0.0084\\
\hline
\end{tabular}
\label{Table:MultiCat}
\end{center}
\end{table}

\section{Real data examples}

In this section, we apply our proposed methods to two real data
sets. The first one has a binary response while the second is
multi-category. We treat both as classification problems and use the
hinge loss discussed in Section \ref{simu:multi:class}.  We compare
our methods with two alternatives: the LASSO and NSC.
\subsection{Neuroblastoma data}
We first consider the neuroblastoma data used in \cite{Oberthuer06}.
The study consists of 251 patients of the German Neuroblastoma
Trials NB90-NB2004, diagnosed between 1989 and 2004. At diagnosis,
patients' ages range from 0 to 296 months with a median age of 15
months.  They analyzed 251 neuroblastoma specimens using a
customized oligonucleotide microarray with the goal of  developing a
gene expression-based classification rule for neuroblastoma patients
to reliably predict courses of the disease.  This also provides a
comprehensive view on which set of genes is responsible for
neuroblastoma.

The complete data set, obtained via the MicroArray Quality Control
phase-II (MAQC-II) project, includes gene expression over 10,707 probe
sites. Of particular interest is to predict the response labeled
``3-year event-free survival'' (3-year EFS) which is a binary variable
indicating whether each patient survived 3 years after the diagnosis
of neuroblastoma. Excluding five outlier arrays, there are 246
subjects out of which 239 subjects have 3-year EFS information
available with 49 positives and 190 negatives. We apply SIS and ISIS
to reduce dimensionality from $p=10, 707$ to $d=50$. On the other
hand, our competitive methods LASSO and NSC are applied directly to
$p=10, 707$ genes. Whenever appropriate, five-fold cross validation is
used to select tuning parameters. We randomly select 125 subjects (25
positives and 100 negatives) to be the training set and the remainder
are used as the testing set. Results are reported in the top half of
Table \ref{NBKL}. Selected probes for LASSO and all different (I)SIS
methods are reported in Table \ref{efsgenetable}.

In MAQC-II, a specially designed end point is the gender of each
subject, which should be an easy classification. The goal of this
specially designed end point is to compare the performance of
different classifiers for simple classification jobs. The gender
information is available for all the non-outlier 246 arrays with 145
males and 101 females. We randomly select 70 males and 50 females to
be in the training set and use the others as the testing set. We set
$d=50$ for our SIS and ISIS as in the case of the 3-year EFS end
point. The results are given in the bottom half of Table \ref{NBKL}.
Selected probes for all different methods are reported in Table
\ref{genetable}.




\begin{table}[ht!]
\caption{Results from analyzing two endpoints  of the neuroblastoma data}\label{NBKL}
\centering{\begin{tabular}{|l|l|cccccc|}

\hline
End point&& SIS & ISIS & var2-SIS & var2-ISIS & LASSO & NSC\\
\hline
\multirow{2}{*}{3-year EFS}&No. of predictors & 5& 23 &10&12& 57 & 9413  \\
&Testing error & 19/114 & 22/114 &22/114&21/114& 22/114 & 24/114\\
\hline
\multirow{2}{*}{Gender}&No. of predictors & 6 & 2 & 4& 2 & 42 & 3  \\
 &Testing error & 4/126 & 4/126 & 4/126&4/126 & 5/126 & 4/126\\
\hline
\end{tabular}
}
\end{table}

We can see from Table \ref{NBKL} that our (I)SIS methods compare
favorably with the LASSO and NSC. Especially for the end point
3-year EFS, our methods use fewer predictors while giving smaller
testing error. For the end point GENDER, Table \ref{genetable}
indicates that  the most parsimonious model given by ISIS and
Var2-ISIS is a sub model of others.

\begin{table}
\caption{Selected probes for the 3-year EFS end point}
\label{efsgenetable}
\begin{tiny}
\begin{tabular}{lccccc}
Probe & SIS & ISIS & var2-SIS & var2-ISIS & LASSO\\
\hline
   `A\_23\_P160638' &  &  &  &  & x  \\
   `A\_23\_P168916' &  & x &  &  & x  \\
   `A\_23\_P42882' &  & x &  &  &   \\
   `A\_23\_P145669' &  &  &  &  & x  \\
   `A\_32\_P50522' &  &  &  &  & x  \\
   `A\_23\_P34800' &  &  &  &  & x  \\
   `A\_23\_P86774' &  & x &  &  &   \\
   `A\_23\_P417918' &  &  & x &  & x  \\
   `A\_23\_P100711' &  &  &  &  & x  \\
   `A\_23\_P145569' &  &  &  &  & x  \\
   `A\_23\_P337201' &  &  &  &  & x  \\
   `A\_23\_P56630' &  & x &  & x & x  \\
   `A\_23\_P208030' &  & x &  &  & x  \\
   `A\_23\_P211738' &  & x &  &  &   \\
   `A\_23\_P153692' &  &  &  &  & x  \\
   `A\_24\_P148811' &  &  & x &  &   \\
   `A\_23\_P126844' &  &  & x &  & x  \\
   `A\_23\_P25194' &  &  &  &  & x  \\
   `A\_24\_P399174' &  &  &  &  & x  \\
   `A\_24\_P183664' &  &  &  &  & x  \\
   `A\_23\_P59051' &  &  &  & x &   \\
   `A\_24\_P14464' &  &  &  &  & x  \\
   `A\_23\_P501831' & x &  & x &  &   \\
   `A\_23\_P103631' &  &  &  & x &   \\
   `A\_23\_P32558' &  &  & x &  &   \\
   `A\_23\_P25873' &  & x &  &  &   \\
   `A\_23\_P95553' &  &  &  &  & x  \\
   `A\_24\_P227230' &  & x &  &  & x  \\
   `A\_23\_P5131' &  &  &  &  & x  \\
   `A\_23\_P218841' &  &  &  &  & x  \\
   `A\_23\_P58036' &  &  &  &  & x  \\
   `A\_23\_P89910' &  & x &  &  &   \\
   `A\_24\_P98783' &  &  &  &  & x  \\
   `A\_23\_P121987' &  & x &  &  & x  \\
   `A\_32\_P365452' &  &  &  &  & x  \\
   `A\_23\_P109682' &  & x &  &  &   \\
   `Hs58251.2' &  &  &  & x &   \\
   `A\_23\_P121102' &  & x &  &  &   \\
   `A\_23\_P3242' &  &  &  &  & x  \\
   `A\_32\_P177667' &  &  &  &  & x  \\
   `Hs6806.2' &  &  &  &  & x  \\
   `Hs376840.2' &  &  &  &  & x  \\
   `A\_24\_P136691' &  &  &  &  & x  \\
   `Pro25G\_B35\_D\_7' &  & x &  & x &   \\
   `A\_23\_P87401' &  &  & x &  &   \\
   `A\_32\_P302472' &  &  &  &  & x  \\
   `Hs343026.1' &  &  &  & x &   \\
   `A\_23\_P216225' &  & x &  & x & x  \\
   `A\_23\_P203419' &  & x &  &  & x  \\
   `A\_24\_P22163' &  & x &  &  & x  \\
   `A\_24\_P187706' &  &  &  &  & x  \\
   `C1\_QC' &  &  &  &  & x  \\
   `Hs190380.1' &  & x &  &  & x  \\
   `Hs117120.1' &  &  &  & x &   \\
   `A\_32\_P133518' &  &  &  &  & x  \\
   `EQCP1\_Pro25G\_T5' &  &  &  &  & x  \\
   `A\_24\_P111061' &  &  &  & x &   \\
   `A\_23\_P20823' & x & x &  & x & x  \\
   `A\_24\_P211151' &  &  & x &  &   \\
   `Hs265827.1' &  & x &  &  & x  \\
   `Pro25G\_B12\_D\_7' &  &  &  &  & x  \\
   `Hs156406.1' &  &  &  &  & x  \\
   `A\_24\_P902509' &  &  &  & x &   \\
   `A\_32\_P32653' &  &  &  &  & x  \\
   `Hs42896.1' &  & x &  &  &   \\
   `A\_32\_P143793' & x &  & x &  & x  \\
   `A\_23\_P391382' &  &  &  &  & x  \\
   `A\_23\_P327134' &  &  &  &  & x  \\
   `Pro25G\_EQCP1\_T5' &  &  &  &  & x  \\
   `A\_24\_P351451' &  &  & x &  &   \\
   `Hs170298.1' &  &  &  &  & x  \\
   `A\_23\_P159390' &  &  &  &  & x  \\
   `Hs272191.1' &  & x &  &  &   \\
   `r60\_a135' &  &  &  &  & x  \\
   `Hs439489.1' &  &  &  &  & x  \\
   `A\_23\_P107295' &  &  &  &  & x  \\
   `A\_23\_P100764' & x & x & x & x & x  \\
   `A\_23\_P157027' &  & x &  &  &   \\
   `A\_24\_P342055' &  &  &  &  & x  \\
   `A\_23\_P1387' & x &  &  &  &   \\
   `Hs6911.1' &  &  &  &  & x  \\
   `r60\_1' &  &  &  &  & x  \\
\end{tabular}
\end{tiny}
\end{table}

\begin{table}
\caption{Selected probe for Gender end point} \label{genetable}
\begin{scriptsize}
\begin{tabular}{lcccccc}
Probe & SIS & ISIS & var2-SIS & var2-ISIS & LASSO & NSC\\
\hline
    `A\_23\_P201035'   &  &  &  &  & x &  \\
    `A\_24\_P167642'   &  &  &  &  & x &  \\
   `A\_24\_P55295'    &  &  &  &  & x &  \\
   `A\_24\_P82200'    & x &  &  &  &  &  \\
   `A\_23\_P109614'   &  &  &  &  & x &  \\
   `A\_24\_P102053'   &  &  &  &  & x &  \\
   `A\_23\_P170551'   &  &  &  &  & x &  \\
   `A\_23\_P329835'   &  &  &  &  &  & x \\
   `A\_23\_P70571'    &  &  &  &  & x &  \\
   `A\_23\_P259901'   &  &  &  &  & x &  \\
   `A\_24\_P222000'   &  &  &  &  & x &  \\
   `A\_23\_P160729'   &  &  &  &  & x &  \\
   `A\_23\_P95553'    & x &  & x &  &  &  \\
   `A\_23\_P100315'   &  &  &  &  & x &  \\
   `A\_23\_P10172'    &  &  &  &  & x &  \\
   `A\_23\_P137361'   &  &  &  &  & x &  \\
   `A\_23\_P202484'   &  &  &  &  & x &  \\
   `A\_24\_P56240'    &  &  &  &  & x &  \\
   `A\_32\_P104448'   &  &  &  &  & x &  \\
   `(-)3xSLv1'      &  &  &  &  & x &  \\
   `A\_24\_P648880'   &  &  &  &  & x &  \\
   `Hs446389.2'     &  &  &  &  & x &  \\
   `A\_23\_P259314'   & x & x & x & x & x & x \\
   `Hs386420.1'     &  &  &  &  & x &  \\
   `Pro25G\_B32\_D\_7' &  &  &  &  & x &  \\
   `Hs116364.2'     &  &  &  &  & x &  \\
   `A\_32\_P375286'   & x &  &  &  & x &  \\
   `A\_32\_P152400'   &  &  &  &  & x &  \\
   `A\_32\_P105073'   &  &  &  &  & x &  \\
   `Hs147756.1'     & x &  &  &  &  &  \\
   `Hs110039.1'     &  &  &  &  & x &  \\
   `r60\_a107'       &  &  &  &  & x &  \\
   `Hs439208.1'     &  &  &  &  & x &  \\
   `A\_32\_P506090'   &  &  &  &  & x &  \\
   `A\_24\_P706312'   &  &  & x &  &  &  \\
   `Hs58042.1'      &  &  &  &  & x &  \\
   `A\_23\_P128706'   &  &  &  &  & x &  \\
   `Hs3569.1'       &  &  &  &  & x &  \\
   `A\_24\_P182900'   &  &  &  &  & x &  \\
   `A\_23\_P92042'    &  &  &  &  & x &  \\
   `Hs170499.1'     &  &  &  &  & x &  \\
   `A\_24\_P500584'   & x & x & x & x & x & x \\
   `A\_32\_P843590'   &  &  &  &  & x &  \\
   `Hs353080.1'     &  &  &  &  & x &  \\
   `A\_23\_P388200'   &  &  &  &  & x &  \\
   `C1\_QC'          &  &  &  &  & x &  \\
   `Hs452821.1'     &  &  &  &  & x &  \\

\end{tabular}
\end{scriptsize}
\end{table}

\subsection{SRBCT data}
In this section, we  apply our method to the children cancer data
set reported in \cite{Khan2001}. \cite{Khan2001} used artificial
neural networks to develop a method of classifying  the small, round
blue cell tumors (SRBCTs) of childhood to one of the four
categories: neuroblastoma (NB), rhabdomyosarcoma (RMS), non-Hodgkin
lymphoma (NHL), and the Ewing family of tumors (EWS) using cDNA gene
expression profiles. Accurate diagnosis of SRBCTs to these four
distinct diagnostic categories is important in that the treatment
options and responses to therapy are different from one category to
another.

 After filtering, 2308 gene profiles out of 6567 genes are
given in the SRBCT data set. It is available online at
http://research.nhgri.nih.gov/microarray/Supplement/. It includes a
training set of size 63 (12 NBs, 20 RMSs, 8 NHLs, and 23 EWS) and an
independent test set of size 20 (6 NBs, 5 RMSs, 3 NHLs, and 6 EWS).

Before performing classification, we standardize the data sets by
applying a simple linear transformation to both the training set and
the test set. The linear transformation is based on the training
data so that, after standardizing, the training data have mean zero
and standard deviation one.  Our (I)SIS reduces dimensionality from
$p=2308$ to $d=\lfloor 63/\log 63 \rfloor=15$ first while alternative methods LASSO and NSC are applied to $p=2308$ gene directly. 
Whenever appropriate, a
four-fold cross validation is used to select tuning parameters.

ISIS, var2-ISIS, LASSO and NSC all achieve zero test error on the 20
samples in the test set. NSC uses 343 genes and LASSO requires 71 genes.
However ISIS and var2-ISIS use 15 and 14 genes, respectively.

This real data application delivers the same message that our new
ISIS and var2-ISIS methods can achieve competitive classification
performance using fewer predictor variables.


\begin{thebibliography}{99}
\bibitem[Akaike(1974)]{Akaike1974}Akaike, H. (1974), A new look at the
statistical model identification. {\em IEEE Transactions on Automatic Control},
\textbf{19}, 716--723.

\bibitem[Antoniadis and Fan(2001)]{AF01} Antoniadis, A. \& Fan, J.,
    Regularized wavelet approximations (with discussion),
    {\em Jour. Ameri. Statist. Assoc.},  96 (2001), 939-967.


\bibitem[Bair et al.(2006)]{BH06} Bair, E., Hastie, T., Paul, D., and Tibshirani, R. (2006).
       Prediction by supervised principal components.  {\em Jour.
       Ameri. Statist. Assoc.}, {\bf 101}, 119-137.


\bibitem[Bickel et al.(2008)]{BR08} Bickel, P. J., Ritov, Y. and Tsybakov, A. (2008).
Simultaneous analysis of Lasso and Dantzig selector.  {\em The
Annals of Statistics}, {\bf 36}, to appear.

\bibitem[Candes and Tao(2007)]{CT07} Candes, E. \& Tao, T, The Dantzig selector: statistical
estimation when $p$ is much larger than $n$ (with discussion), {\em
Ann. Statist.}, {\bf 35}, 2313-2404.

\bibitem[Donoho and Elad(2003)]{DL03} Donoho, D. L. and  Elad, E. (2003).  Maximal sparsity
    representation via $l_1$ Minimization, {\em Proc. Nat. Aca. Sci.},
    {\bf 100}, 2197-2202.

\bibitem[Dudoit et al.(2003)]{DS03}  Dudoit, S., Shaffer, J.P. and Boldrick, J.C. (2003)
    Multiple hypothesis testing in microarray experiments.
    {\em Statist. Sci.}, {\bf 18}, 71--103.

\bibitem[Efron et al.(2004)]{EH04} Efron, B., Hastie, T., Johnstone, I. \& Tibshirani,
R. (2004),
        Least angle regression (with discussions), {\em Ann. Statist.}.
         {\bf 32}, 409-499.

\bibitem[Efron(2008)]{EF08} Efron, B. (2008).  Microarrays,
empirical Bayes and the two-groups model (with discussions).  {\em
Statistical Sci.}, {\bf 23}, 1-47.

\bibitem[Fan and Fan(2008)]{FF08} Fan, J. \& Fan, Y. (2008),  High dimensional classification
using shrunken independence rule, {\em Ann. Statist}, to appear.



\bibitem[Fan and Li(2001)]{FL01} Fan, J. \& Li, R. (2001), Variable selection via nonconcave
    penalized likelihood and its oracle properties,
    {\em Jour.  Ameri. Statist. Assoc.},  {\bf 96}, 1348-1360.

\bibitem[Fan and Lv(2008)]{FL08} Fan, J. \& Lv, J. (2008), Sure independence screening
     for ultra-high dimensional feature space (with discussion),
     {\em Jour. Roy. Statist. Soc., B}, {\bf 70}, 849-911.

\bibitem[Fan and Peng(2004)]{FP04} Fan, J. \& Peng, H. (2004),  On non-concave penalized
      likelihood with diverging number of parameters,
      {\em Ann. Statist.}, {\bf 32}, 928-961.

\bibitem[Fan and Ren(2006)]{FR06} Fan, J. \& Ren, Y. (2006),  Statistical analysis of DNA
microarray data (2006),   {\em  Clinical Cancer Research}, {\bf 12},
4469--4473.

\bibitem[Freund and Schapire(1997)]{FS97} Freund, Y. \& Schapire, R.E. (1997),  A decision-theoretic
generalization of on-line learning and an application to boosting.
        {\em Jour.  Comput. Sys. Sci.}, {\bf 55}, 119--139.

\bibitem[Hastie et al.(2001)]{HT01} Hastie, T.J., Tibshirani, R. \& Friedman, J.,  {\em
       The elements of Statistical Learning:  Data Mining, Inference,
       and Prediction},  { Springer}, 2001.


\bibitem[Hall et al.(2008)]{HT08} Hall, P., Titterington, M. and Xue,  (2008).  Tiling methods
for assessing the influence of components in a classifier.  {\em
Manuscript}.

\bibitem[Huber(1964)]{HU64} Huber, P. (1964), Robust estimation of location, {\em Ann.
Math. Statist.}, {\bf 35}, 73--101.


\bibitem[Khan et~al.(2001)]{Khan2001} Khan, J., Wei, J. S., Ringner, M.,  Saal, L. H., Ladanyi, M., Westermann, F., Berthold, F., Schwab, M., Antonescu, C. R., Peterson, C. and Meltzer, P. S. (2001).
Classification and diagnostic prediction of
 cancers using gene expression profiling and artificial neural networks. {\em Nature
 Medicine}, {\bf 7}, 673--679.



\bibitem[{Lee et~al.(2004) Lee, Lin and Wahba}]{LeeLW04} {Lee, Y.}, {Lin, Y.} and {Wahba, G.} (2004).
\newblock Multicategory Support Vector Machines, Theory, and Application to the
     Classification of Microarray Data and Satellite Radiance Data.
\newblock \textit{Journal of American Statistical Association},
     \textbf{99}, 67--81.


\bibitem[{Liu et~al.(2005)Liu, Shen and Doss}]{LiuShenDoss05}
{Liu, Y.}, {Shen, X.} and {Doss, H.} (2005).
\newblock Multicategory $\psi$-learning and support vector machine:
  computational tools.
\newblock \textit{Journal of Computational and Graphical Statistics},
  \textbf{14}, 219--236.

\bibitem[McCullagh and Nelder(1989)]{MN89} McCullagh, P. \& Nelder, J. A. (1989), {\em Generalized
Linear Models}, Chapman \& Hall, London.

\bibitem[Meinshausen and B\"uhlmann(2006)]{MB06}
Meinshausen, N. and B\"uhlmann, P. (2006).
   High dimensional graphs and variable selection with the Lasso.
   {\em The Annals of Statistics}, {\em 34}, 1436--1462.

\bibitem[Oberthuer et al.(2006)]{Oberthuer06} Oberthuer, A., Berthold, F., Warnat, P., Hero, B., Kahlert, Y., Spitz, R., Ernestus, K., K\"{o}nig, R., Haas, S., Eils, R., Schwab, M., Brors, B., Westermann, F., Fischer, M. (2006) Customized Oligonucleotide Microarray Gene Expression–Based Classification of Neuroblastoma Patients Outperforms Current Clinical Risk Stratification, {\em Journal of Clinical Oncology}, {\bf 24}, 5070--5078.

\bibitem[Park and Hastie(2007)]{PH07} Park, M. Y. \& Hastie, T. (2007), $L_1$-regularization
path algorithm for generalized linear models, {\em J. Roy. Statist. Soc. Ser. B}, {\bf 69},
659--677.

\bibitem[Paul et al.(2008)]{PB08} Paul, D., Bair, E., Hastie, T.,  Tibshirani, R. (2008).
    ``Pre-conditioning'' for feature selection and regression in
        high-dimensional problems.  {\em Ann. Statist.}, to appear.

\bibitem[Schwarz(1978)]{Schwarz1978}Schwarz, G. (1978), Estimating the
dimension of a model, {\em Ann. Statist.}, \textbf{6}, 461--464.



\bibitem[Tibshirani(1996)]{TI96} Tibshirani, R. (1996),  Regression shrinkage and selection
        via lasso,  {\em Jour. Roy. Statist. Soc. B.}, {\bf 58},
        267-288.

\bibitem[Tibshirani et al.(2003)]{TH03} Tibshirani, R., Hastie, T., Narasimhan, B. \& Chu, G. (2003),
   Class prediction by nearest shrunken centroids, with
    applications to DNA microarrays.  {\em Statist. Sci.}, {\bf 18},
    104-117.



\bibitem[Zou(2006)]{ZO06} Zou, H. (2006), The adaptive Lasso \& its oracle
properties, {\em J. Amer. Statist. Assoc.}, {\bf 101},
1418--1429.

\bibitem[Zhang(2007)]{ZH07} Zhang, C.-H. (2007), Penalized linear unbiased selection, \textit{
Manuscript}.

\bibitem[Zhang and Huang(2008)]{ZH08} Zhang, C.-H. \& Huang, J. (2008),   The sparsity and
bias of the LASSO selection in high-dimensional linear regression.
{\em Ann. Statist.}, to appear.

\bibitem[Zhao and Yu(2006)]{ZY06} Zhao, P. and Yu, B. (2006). On model selection consistency
of Lasso. {\em J. Machine Learning Res.}, \textbf{7}, 2541--2567


\bibitem[Zou and Li(2008)]{ZL08} Zou, H. \& Li, R. (2008), One-step sparse
estimates in nonconcave penalized likelihood models (with
discussion), \textit{Ann. Statist.}, {\bf 36}, 1509-1566.



\end{thebibliography}
\end{document}